\theoremstyle{plain}
\newtheorem{theorem}{Theorem}
\newtheorem{lemma}[theorem]{Lemma}
\newtheorem{corollary}[theorem]{Corollary}
\newtheorem{observation}[theorem]{Observation}
\crefname{rrule}{Reduction Rule}{Reduction Rules}
\crefname{cond}{Condition}{Conditions}
\crefname{myclaim}{Claim}{Claims}
\crefname{step}{Step}{Steps}
\theoremstyle{definition}
\newtheorem{definition}{Definition}
\theoremstyle{remark}
\newtheorem{remark}{Remark}
\theoremstyle{plain}
\Crefname{ALC@unique}{Line}{Lines}
\crefname{cond}{Condition}{Conditions}
\crefname{step}{step}{steps}
\newcommand{\Match}{\textsc{Matching}\xspace}
\newcommand{\MaxCardMatch}{\textsc{Maximum Cardinality Matching}\xspace}
\newcommand{\distPara}{k}
\numberwithin{theorem}{section}
\newcommand{\TheTitle}{A Linear-Time Algorithm for Maximum-Cardinality \\ Matching on Cocomparability Graphs} 
\newcommand{\TheAuthors}{G.\ B.\ Mertzios, A.\ Nichterlein, R.\ Niedermeier}
\begin{document}

\title{\TheTitle}

\author[1]{George B. Mertzios\thanks{Partially supported by the EPSRC grant EP/P020372/1.}}
\author[1,2]{André Nichterlein\thanks{Supported by a postdoc fellowship of the German Academic Exchange Service (DAAD) while at Durham University.}}
\author[2]{Rolf Niedermeier}

\affil[1]{Department of Computer Science, Durham University, UK \newline \texttt{george.mertzios@durham.ac.uk}}
\affil[2]{Algorithmics and Computational Complexity, Faculty~IV, TU Berlin, Germany \newline \texttt{\{andre.nichterlein,rolf.niedermeier\}@tu-berlin.de}}
\date{}

\maketitle

\thispagestyle{scrheadings}
\cfoot{}
\ohead{}
\ifoot{}

\begin{abstract}
Finding maximum-cardinality matchings in undirected graphs is arguably one of the most central graph problems. 
For general $m$-edge and $n$-vertex graphs, it is well-known to be solvable in $O(m\sqrt{n})$~time. 
We present a linear-time algorithm to find maximum-cardinality matchings on \emph{cocomparability} graphs, a prominent subclass of perfect graphs that strictly contains interval graphs as well as permutation graphs.
Our greedy algorithm is based on the recently discovered \emph{Lexicographic Depth First Search (LDFS)}. 
\end{abstract}

\section{Introduction}

The problem \Match (or \MaxCardMatch) is, given an undirected graph, to compute a maximum-cardinality set of disjoint edges. 
\Match is arguably among the most fundamental graph-algorithmic primitives that can be computed in polynomial time. 
More specifically, the asymptotically fastest known algorithm for computing a maximum-cardinality matching (subsequently called maximum matching) on an $n$-vertex and $m$-edge graph runs in~$O(m\sqrt{n})$ time~\cite{MV80}. 
No faster algorithm is known even when the given graph is bipartite~\cite{HK73}. 
Improving this running time, either on general graphs or on bipartite graphs, resisted decades of research. 
In terms of approximation, it is known that the~$O(m \sqrt{n})$ algorithm of Micali and Vazirani~\cite{MV80} implies a~$(1-\epsilon)$-approximation computable in~$O(m \epsilon^{-1})$ time~\cite{DP14}. 
For the weighted case, Duan and Pettie~\cite{DP14} provided a linear-time algorithm that computes a~$(1-\epsilon )$-approximate maximum-weight matching (the constant running time factor depending on~$\epsilon$ is $\epsilon^{-1}\log(\epsilon^{-1})$). 
In this work we take a route different to approximation and identify a large graph class, 
namely cocomparability graphs, on which we show that an optimal solution can be computed in linear time.

To identify more efficiently solvable special cases for finding maximum matchings has quite some history.
Yuster~\cite{Yus13} developed an algorithm with running time~$O(rn^2\log n)$,
where $r$ denotes the difference between maximum and minimum vertex degree 
of the input graph. Moreover, there are (quasi)\emph{linear-time} algorithms 
for computing maximum matchings in several special classes of 
graphs, including interval graphs~\cite{LR93}, 
convex bipartite graphs~\cite{SY96}, strongly chordal
graphs~\cite{DK98}, and chordal bipartite graphs~\cite{Cha96}. 
We refer to \cref{tbl:matching-algorithms} for a more thorough overview, also including results with superlinear running times.
See \cref{fig:graph-hierarchy} for an overview concerning the containment relation between the graph classes.

\begin{table}
	\caption{Fastest algorithms for \Match on special graph classes; $\omega < 2.373$ is the matrix multiplication exponent, that is, two~$n \times n$ matrices can be multiplied in~$O(n^\omega)$ time.}
	
	\begin{tabularx}{\textwidth}{ll}
		\toprule
		graph class & running time \\
		\midrule
		general 			& $O(m\sqrt{n})$~\cite{MV80}, $O(m \sqrt{n} \log(n^2/m) / \log (n))$~\cite{GK04}, $O(n^{\omega})$ (rand.) \cite{MS04} \\
		bipartite 			& $O(m \sqrt{n})$ \cite{HK73}, $O(n^{\omega})$ (rand.) \cite{MS04,San09}, $O(m^{1.43})$ \cite{Mad13} \\
		interval  			& $O(n \log n)$ (given an interval representation) \cite{LR93,MJ89} \\
		circular arcs		& $O(n \log n)$ \cite{LR93} \\
		co-interval 		& $O(n \log n + m)$ \cite{Gar03}\\
		convex 				& $O(n)$ \cite{SY96} \\
		planar	 			& $O(n^{\omega/2})$ (rand.) \cite{MS06} \\
		strongly chordal 	& $O(n+m)$ (given the strong perfect elimination order) \cite{DK98}  \\
		chordal bipartite 	& $O(n+m)$ \cite{Cha96} \\

		regular 	 		& $O(n^2 \log n)$ \cite{Yus13} \\
		cographs			& $O(n)$ (given a co-tree) \cite{YY93} \\
		\midrule
		co-comparability 	& $O(n + m)$ (\cref{thm:rmm-returns-max-matching} in \Cref{sec:cocomp-linear}) \\
		\bottomrule
	\end{tabularx}
	\label{tbl:matching-algorithms}
\end{table}

\begin{figure}[t!]
	\newcommand{\ColorOpen}{white}
	\newcommand{\ColorHard}{red!56}
	\newcommand{\ColorFPT}{green!50}
	\resizebox{\textwidth}{!}
	{\centering
		\begin{tikzpicture}
			\tikzstyle{para}=[rectangle, minimum height=.7cm,draw=black,rounded corners=3mm]

			\matrix[ampersand replacement=\&,row sep=0.5cm,column sep=0.5cm]
			{
				
				\node[para] (pla) {planar};
				\& 
				\& \node[para] (per) {perfect};
				\& 
				\& \\
				
				\node[para] (bip) {bipartite};
				\& \node[para] (ca) {circular arc};
				\& \node[para] (sco) {strongly chordal};
				\& \node[para] (coi) {co-interval};
				\& \node[para] (coc) {cocomparability};\\
				
				\node[para] (con) {convex};
				\& 
				\& \node[para] (int) {interval};
				\& 
				\& \node[para] (cog) {cograph};\\
			};
		\draw[thick] (per) to[out=190,in=20] (bip);
		\draw[thick] (per) -- (sco) -- (int);
		\draw[thick] (per) -- (coi);
		\draw[thick] (per) to[out=350,in=165] (coc);
		\draw[thick] (coc) to[out=200,in=20] (int);
		\draw[thick] (coc) -- (cog);
		\draw[thick] (ca) -- (int);
		\draw[thick] (bip) -- (con);
	\end{tikzpicture}
	}
	\caption{
		Overview over most of the graph classes mentioned in \cref{tbl:matching-algorithms}. An edge indicates that the class above strictly contains the class below.
	}
	\label{fig:graph-hierarchy}
\end{figure}
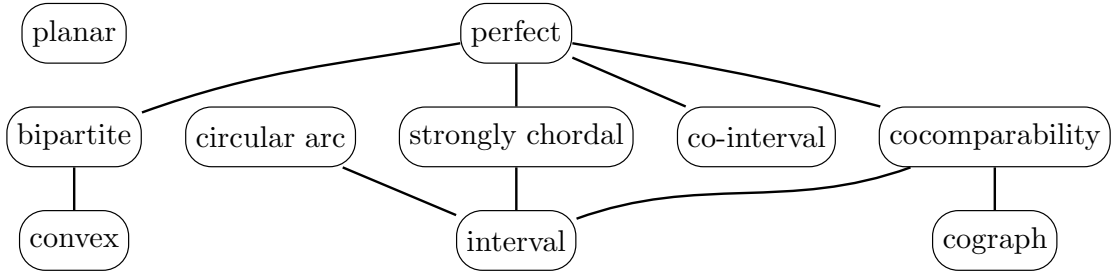

A graph $G$ is a \emph{cocomparability} graph if its complement~$\overline{G}$ admits a transitive orientation of its edges. 
These graphs (as well as their complements, i.e.~comparability graphs) arise naturally in several real-world applications as they are closely related to \emph{partially ordered sets} (also referred to as \emph{posets}). 
In particular, a given cocomparability graph $G$, together with a transitive orientation of the edges of its complement $\overline{G}$, can be equivalently represented by a poset.
Cocomparability graphs have been subject of intensive theoretical 
research~\cite{CL-ACTINF-97,CR-NETWORKS95,CDHKSIDMA16,CorneilDH-MPC,DS-SICOMP94,DHC16-ARXIV,DM-DAM17,KohlerM14,KratschStewart93,MertziosCorneil-LP,MertziosZ16}.
On the one hand, cocomparability graphs naturally generalize well-studied graph classes such as 
interval and permutation graphs~\cite{Golumbic04,BLS99}, 
\emph{trapezoid} (or \emph{bounded multitolerance}) graphs~\cite{MertziosC11,Ilic13,MaS94a,KrawczykW17,Takaoka15,Mertzios14}, 
\emph{parallelogram} (or \emph{bounded tolerance}) graphs~\cite{MertziosSZ09,MertziosSZ11,GiannopoulouM16}, 
\emph{triangle} (or \emph{PI$^*$}) graphs~\cite{Mertzios12,CorneilC87}, and 
\emph{simple-triangle} (or \emph{PI}) graphs~\cite{Mertzios15,Takaoka17,Takaoka-PI-recognition-ARXIV-18}. 
On the other hand, cocomparability graphs form an ``almost maximal'' subclass of perfect graphs~\cite{BLS99}.\footnote{For an overview of the relation between graph classes see~\url{http://www.graphclasses.org/}.}
Since perfect graphs (as well as comparability graphs) properly contain bipartite graphs (for which improving the~$O(m \sqrt{n})$ running time is a long-standing open question), it seems out of reach to obtain an algorithm for \Match with linear running time on perfect graphs. 
Consequently, designing a linear-time algorithm for cocomparability graphs provides a sharp boundary between $O(n+m)$-time algorithms and the known $O(m\sqrt{n})$-time algorithms for \Match.

\subparagraph*{Our contribution}
In this paper we present a linear-time algorithm for \Match on \emph{cocomparability} graphs. 
It is a simple greedy algorithm, referred to as \emph{Rightmost Matching (RMM)}, running on a specific vertex ordering.
Essentially the same greedy approach was earlier considered by Dragan~\cite{Dragan97} in the context of greedy matchable graphs.\footnote{Refer to Remark 2 in \Cref{subsec:cocomp-alg} for a discussion about the subtle but important differences to our approach.}
The vertex ordering is obtained by using (as a preprocessing step) the recently discovered \emph{Lexicographic Depth First Search (LDFS)} 
algorithm~\cite{Corneil-LDFS08}. 
Interestingly 
it turns out that RMM computes in a \emph{trivial} way a maximum matching on interval graphs, when applied on the standard interval graph vertex ordering\footnote{This is the vertex ordering that results from sorting the intervals according to their left endpoints. 
The RMM algorithm for interval graphs was discovered by Moitra and Johnson~\cite{MJ89}.}. 
Note that the class of interval graphs is a strict subset of the class of cocomparability graphs. 
So far a similar phenomenon of extending an interval graph algorithm to cocomparability graphs 
by using an LDFS preprocessing step has also been observed 
for the \textsc{Longest Path} problem~\cite{MertziosCorneil-LP}, 
the \textsc{Minimum Path Cover} problem~\cite{CorneilDH-MPC}, 
and the \textsc{Maximum Independent Set} problem~\cite{CDHKSIDMA16}.
Our results for the RMM algorithm, adding to the previous results~\cite{MertziosCorneil-LP,CorneilDH-MPC,CDHKSIDMA16}, 
provide evidence that cocomparability graphs present an ``interval graph structure'' when they are considered with an LDFS preprocessing step. 
This insight is of independent interest and might lead to new and more efficient combinatorial algorithms.

\subparagraph*{Preliminaries} %
We use standard notation from graph theory. 
In particular, all paths we consider are simple paths. 
A \emph{matching} in a graph is a set of pairwise disjoint edges.
Let~$G = (V,E)$ be an undirected graph and let~$M \subseteq E$ be a matching in~$G$.
A vertex~$v \in V$ is called \emph{matched} with respect to~$M$ if there is an edge in~$M$ containing~$v$, otherwise~$v$ is called \emph{free} with respect to~$M$.
If the matching~$M$ is clear from the context, then we omit ``with respect to~$M$''.
An \emph{alternating path} with respect to~$M$ is a path in~$G$ such that every second edge of the path belongs to~$M$.
An \emph{augmenting path} is an alternating path whose endpoints are free.
It is well-known that a matching~$M$ is maximum if and only if there is no augmenting path for it~\cite{LP86}. 
A graph $G=(V,E)$ is an \emph{interval} graph if we can assign to each vertex of $G$ a closed interval on the real line such that two vertices are adjacent in~$G$ if and only if the corresponding two intervals intersect. 
A \emph{comparability} graph is a graph whose edges can be transitively oriented, that is, if $u\rightarrow v$ (the edge $\{u,v\}$ is oriented towards~$v$) and $v\rightarrow w$, then $u\rightarrow w$. 
A \emph{cocomparability} graph $G$ is a graph whose complement~$\overline{G}$ is a comparability graph. 
The class of interval graphs is strictly included in the class of cocomparability graphs~\cite{BLS99}. 
Intuitively, we can transitively orient the ``non-edges'' of an interval graph, using the following ordering of non-intersecting intervals from left to right: 
Consider three intervals $I_a, I_b, I_c$ in an interval representation of an interval graph. 
If $I_a$ lies completely to the left of~$I_b$, and~$I_b$ lies completely to the left of~$I_c$, then also $I_a$ lies completely to the left of~$I_c$.

\section{A linear-time algorithm for cocomparability graphs} \label{sec:cocomp-linear}

To begin with, we present in \Cref{subsec:interval-alg} a simple greedy linear-time algorithm (called RMM) for computing a maximum matching~$M$ on interval graphs. 
Subsequently we provide in \Cref{subsec:vertex-orderings} all necessary background on vertex orderings for cocomparability graphs and on the Lexicographic Depth First Search (LDFS), which is needed for our algorithm on cocomparability graphs. 
Finally, as our central result, we prove in \Cref{subsec:cocomp-alg} that the algorithm RMM actually works also for cocomparability graphs.

\subsection{The greedy algorithm for interval graphs} \label{subsec:interval-alg}
Given an interval graph $G$ with $n$~vertices and $m$~edges, we first compute in $O(n+m)$~time an interval representation of~$G$ and, at the same time, we also sort the intervals according to their left endpoint~\cite{Olariu91}. 
The algorithm works as follows (cf.~\cite{LR93,MJ89}): 
\begin{enumerate}
	\item Initialize $M=\emptyset$ and label all vertices as ``unvisited''. 
	\item Pick the unvisited vertex (interval)~$x$ which has the rightmost left endpoint among all currently unvisited vertices in~$G$. 
	Then, label~$x$ as ``visited''. \label[step]{step:take-rightmost-unvisited}
	\item If~$x$ has at least one unvisited neighbor in~$G$, then pick the unvisited neighbor~$y$ of~$x$ which has the rightmost left endpoint among all unvisited neighbors of~$x$. 
	Then label~$y$ as ``visited'' and add the edge $\{x,y\}$ to~$M$. 
	\item If there is still an unvisited vertex in~$G$, then go to \Cref{step:take-rightmost-unvisited}. 
	\item Return~$M$.
\end{enumerate}
We call the above algorithm \emph{Rightmost-Matching (RMM)}. 
It can be executed in~$O(n+m)$ time; with a simple exchange argument we can show that the matching~$M$ returned by RMM is indeed maximum in $G$.
This algorithm implicitly uses the following vertex ordering that characterizes interval graphs.
It corresponds to sorting the intervals according to their left endpoints and can be computed in~$O(n+m)$ time from~$G$~\cite{Olariu91}.

\begin{lemma}[\hspace{-0.001cm}\cite{Olariu91}]
	\label[lemma]{lem:left-ordering-interval}
	$G=(V,E)$ is an interval graph if and only if there exists a vertex ordering $\sigma$ of $G$ (called an \emph{I-ordering}) such that, for all~$x <_{\sigma} y <_{\sigma} z$, if $\{x,z\} \in E$, then also~$\{x,y\} \in E$.
\end{lemma}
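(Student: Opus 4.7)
The plan is to prove the two directions of this iff characterization separately. The forward direction is essentially an observation about interval representations, while the reverse direction requires a constructive argument that turns a vertex ordering into an interval representation.

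For the ``only if'' direction, I would start from an interval representation $\{I_v : v \in V\}$ of $G$ and let $\sigma$ be the ordering by (non-decreasing) left endpoints, breaking ties arbitrarily; this is the very ordering used by the RMM algorithm in \Cref{subsec:interval-alg}. Given $x <_\sigma y <_\sigma z$ with $\{x,z\} \in E$, let $\ell(\cdot)$ and $r(\cdot)$ denote left and right endpoints. Then $\ell(x) \le \ell(y) \le \ell(z)$ and $I_x \cap I_z \neq \emptyset$ forces $\ell(z) \le r(x)$. Combining these chains yields $\ell(y) \le \ell(z) \le r(x)$ together with $\ell(x) \le \ell(y) \le r(y)$, so $I_x \cap I_y \neq \emptyset$ and thus $\{x,y\} \in E$. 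This part is just an exercise in comparing endpoints on the real line.

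For the ``if'' direction, I would build an interval representation directly from $\sigma$. Let $v_1 <_\sigma v_2 <_\sigma \cdots <_\sigma v_n$ and, for each index $i$, define
\[
I_{v_i} = [\,i,\; r(i)\,], \qquad \text{where } r(i) = \max\bigl(\{i\} \cup \{j : j > i,\ \{v_i, v_j\} \in E\}\bigr).
\]
So the left endpoint encodes the position in $\sigma$, and the right endpoint reaches to the rightmost later neighbor of $v_i$ (or stays at $i$ if there is none). It then remains to show $\{v_i, v_j\} \in E \iff I_{v_i} \cap I_{v_j} \neq \emptyset$ for every $i < j$. The forward implication is immediate from the definition of $r(i)$. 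For the reverse, suppose $I_{v_i} \cap I_{v_j} \neq \emptyset$ with $i < j$; then $j \le r(i)$, so by definition of $r(i)$ there exists $k \ge j$ with $\{v_i, v_k\} \in E$. If $k = j$ we are done; otherwise $i < j < k$ and the I-ordering property applied to $v_i <_\sigma v_j <_\sigma v_k$ with $\{v_i, v_k\} \in E$ gives $\{v_i, v_j\} \in E$.

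The only place the I-ordering hypothesis is actually used is in this last sentence, which is also the one step that I expect to require a moment of care; everything else is bookkeeping about endpoints and indices. No forbidden-subgraph machinery or transitive-orientation facts are needed, which is what makes this a clean, elementary proof suitable to cite Olariu's construction.
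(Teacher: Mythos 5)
Your proof is correct. Note, however, that the paper does not prove this lemma at all: it is stated as a known characterization and attributed to Olariu~\cite{Olariu91}, so there is no in-paper argument to compare against. What you have written is a clean, self-contained justification of the cited result. Both directions check out: in the only-if direction, ordering by left endpoints gives $\ell(x)\le\ell(y)\le\ell(z)\le r(x)$, so $\ell(y)\in I_x\cap I_y$; in the if direction, your intervals $[\,i, r(i)\,]$ intersect (for $i<j$) exactly when $j\le r(i)$, and the single application of the I-ordering property to $v_i<_\sigma v_j<_\sigma v_{r(i)}$ closes the gap between ``$v_i$ has a neighbor at or beyond position $j$'' and ``$v_i$ is adjacent to $v_j$.'' The one stylistic remark worth making is that your construction in the if direction also shows, essentially for free, that the ordering by position $i$ is itself an ordering by left endpoints of the constructed representation, which is the consistency one would want with the forward direction and with the use of I-orderings in \Cref{subsec:interval-alg}.
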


\subsection{Cocomparability graphs and vertex orderings} \label{subsec:vertex-orderings}

Before we proceed with our algorithm RMM and its analysis on cocomparability graphs (see~\Cref{subsec:cocomp-alg}), we now state vertex ordering characterizations of cocomparability graphs and of any vertex ordering that can result from an LDFS search on an arbitrary graph. 
The following vertex ordering characterizes cocomparability graphs~\cite{KratschStewart93}.

\begin{definition}[\hspace{-0.001cm}\cite{KratschStewart93}]
\label[definition]{def:umbrella-free}
	Let~$G=(V,E)$ be a graph. An ordering $\pi$ of the vertices~$V$ is an \emph{umbrella-free} ordering (or a \emph{CO-ordering}) if for all~$x <_{\pi} y <_{\pi} z$ it holds that if $\{x,z\} \in E$, then~$\{x,y\} \in E$ or~$\{y,z\} \in E$ (or both).
\end{definition}

\begin{lemma}[\hspace{-0.001cm}\cite{KratschStewart93}]
\label[lemma]{lem:cocomp-umbrella-free}
	A graph~$G = (V,E)$ is a cocomparability graph if and only if there exists an umbrella-free ordering~$\pi$ of~$V$.
\end{lemma}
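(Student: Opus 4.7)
The plan is to prove both directions by viewing any vertex ordering $\pi$ as producing an orientation of the edges of $\overline{G}$ (from $\pi$-smaller to $\pi$-larger endpoint) and, conversely, any transitive orientation of $\overline{G}$ as inducing a strict partial order that admits linear extensions. The key observation is that \Cref{def:umbrella-free}, after passing to the complement, is exactly the statement forbidding the only configuration that could violate transitivity of such an orientation, so the two conditions translate into one another almost mechanically.

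For the $(\Leftarrow)$ direction I would suppose $\pi$ is an umbrella-free ordering of $V$ and orient each non-edge $\{u,v\}$ of $G$ (equivalently each edge of $\overline{G}$) as $u \to v$ whenever $u <_\pi v$. This orientation is clearly acyclic. To check transitivity, take any $x \to y$ and $y \to z$; by construction $x <_\pi y <_\pi z$ and both $\{x,y\}$ and $\{y,z\}$ are non-edges of $G$. If $\{x,z\}$ were an edge of $G$, then \Cref{def:umbrella-free} applied to the triple $x <_\pi y <_\pi z$ would force $\{x,y\} \in E$ or $\{y,z\} \in E$, a contradiction. Hence $\{x,z\}$ is a non-edge and the orientation rule gives $x \to z$, as required.

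For the $(\Rightarrow)$ direction, suppose $\overline{G}$ admits a transitive orientation. Transitivity implies acyclicity (any directed cycle would, under iterated transitivity, produce two opposite arrows on the same edge), so the orientation induces a strict partial order $\prec$ on $V$; let $\pi$ be any linear extension of $\prec$. To verify that $\pi$ is umbrella-free, consider $x <_\pi y <_\pi z$ with $\{x,z\} \in E$ and suppose for contradiction that $\{x,y\}, \{y,z\} \notin E$. Then both edges lie in $\overline{G}$ and, because $\pi$ extends $\prec$, the transitive orientation directs them as $x \to y$ and $y \to z$; transitivity then forces $x \to z$ in $\overline{G}$, contradicting $\{x,z\} \in E$. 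Because the whole argument is essentially a direct translation of umbrella-freeness into transitivity of a natural orientation of the complement, I do not expect a substantial obstacle; the only delicate point is checking that a transitive orientation of a simple graph is automatically acyclic, so that the partial-order step in the $(\Rightarrow)$ direction is justified.
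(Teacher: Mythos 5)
Your proof is correct. The paper states this lemma as a known result of Kratsch and Stewart and gives no proof of its own, so there is no in-paper argument to compare against; your approach --- orienting the non-edges of $G$ from $\pi$-smaller to $\pi$-larger endpoint for the $(\Leftarrow)$ direction, and taking a linear extension of the strict partial order induced by a transitive orientation of $\overline{G}$ for the $(\Rightarrow)$ direction --- is the standard argument from the literature. You also correctly handle the two points that are easiest to gloss over: that a transitive orientation of a simple graph is acyclic (so a linear extension exists), and that for a non-edge $\{x,y\}$ of $G$ with $x <_\pi y$ the orientation must point from $x$ to $y$, since $x$ and $y$ are comparable in the induced order and $\pi$ extends it.
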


Umbrella-free orderings directly generalize I-orderings for interval graphs (see~\Cref{lem:left-ordering-interval}).
It is worth noting here that, although there exists a linear-time algorithm to \emph{compute} 
an umbrella-free ordering $\pi$ of a given cocomparability graph~\cite{McConnellSpinrad94}, 
the fastest known algorithm to \emph{verify} that a given vertex ordering is indeed umbrella-free needs 
the same time as boolean matrix multiplication (Spinrad~\cite{Spinrad03} discusses this issue). 
As an example, we illustrate in \cref{ldfs-fig-antihole} the cocomparability graph 
$\overline{C_{6}}$, i.e.~the complement of the cycle on six vertices. 
In this graph, it is straightforward to check by~\Cref{def:umbrella-free} that 
the vertex ordering $\pi=(b,d,c,f,e,a)$ is indeed an umbrella-free ordering.

\begin{figure}[t]
\centering
\includegraphics[scale=0.9]{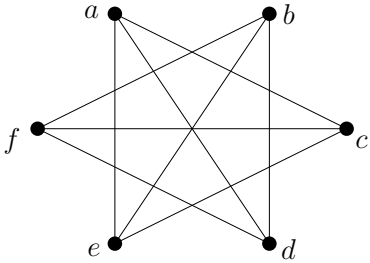}
\caption{The cocomparability graph $G=\overline{C_{6}}$, i.e.~the complement of a cycle 
on six~vertices. The vertex ordering $\pi=(b,d,c,f,e,a)$ is an umbrella-free ordering for $G$.}
\label{ldfs-fig-antihole}
\end{figure}

In the following we present the notion of a \emph{Lexicographic Depth First Search (LDFS) ordering} $\sigma$ (see~\Cref{def:ldfs-ordering}) due to Corneil and Krueger~\cite{Corneil-LDFS08}. 
This notion is based on \emph{good triples} and \emph{bad triples}, which are defined next.

\begin{definition}[\hspace{-0.001cm}\cite{Corneil-LDFS08}]\label[definition]{def:good-bad-triple}
	Let~$G = (V,E)$ be a graph and~$\sigma$ be an arbitrary ordering of~$V$. 
	Let~$a,b,c\in V$ be three vertices such that~$a <_{\sigma} b <_{\sigma} c$, $\{a,c\} \in E$, and ${\{a,b\} \notin E}$. 
	If there exists a vertex~$d$ such that~$a <_{\sigma} d <_{\sigma} b$, $\{d,b\} \in E$, and $\{d,c\} \notin E$, then $(a,b,c)$ is a \emph{good triple}, otherwise it is a \emph{bad triple}.
\end{definition}

\begin{definition}[\hspace{-0.001cm}\cite{Corneil-LDFS08}]\label[definition]{def:ldfs-ordering}
	Let~$G = (V,E)$ be a graph. 
	An ordering~$\sigma$ of~$V$ is an \emph{LDFS ordering} if~$\sigma$ has no bad triple.
\end{definition}

\begin{figure}[t]

\centering
\includegraphics[scale=0.9]{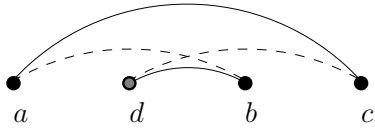}
\caption{A good triple $(a,b,c)$ and its vertex~$d$ as in \cref{def:good-bad-triple}, in the vertex ordering $\protect\sigma=(a,d,b,c)$. 
The edges $\{a,c\}$ and $\{d,b\}$ are indicated with solid lines and the non-edges $\{a,b\}$ and $\{d,c\}$ with dashed lines. 
Note that~$\{a,d\}$ and~$\{b,c\}$ can be edges or non-edges.
}
\label{good-triple-fig}
\end{figure}

An example of a good triple $(a,b,c)$ and the corresponding fourth vertex~$d$ is depicted in \cref{good-triple-fig}. 
Now we present the generic LDFS algorithm (\cref{ldfs-alg}) due to Corneil and Krueger~\cite{Corneil-LDFS08}. 
LDFS runs on an arbitrary connected graph $G$, starting at a distinguished vertex $u$. 
It is a variation of the well-known Depth First Search (DFS) algorithm; the main difference is that LDFS assigns labels to the vertices and uses the lexicographic order over these labels as a tie-breaking rule. 
Briefly, it proceeds as follows. 
Initially, the label $\varepsilon$ is assigned to every vertex. 
Then, iteratively, an unvisited vertex~$v$ with a lexicographically maximum label is chosen and removed from the graph. 
If $v$ is chosen as the $i$th vertex, then the label of each of its unvisited neighbors is being updated by \emph{prepending} the digit $i$ to it. 
Note that the digits in the label of any vertex are always in decreasing order. 
Hence all neighbors of the last chosen vertex have a lexicographically greater label than all its non-neighbors, and thus all vertices are visited in a depth-first search order.

\begin{algorithm}[htb]
\caption{LDFS($G,u$)~\cite{Corneil-LDFS08}} \label{ldfs-alg}
\begin{algorithmic}[1]
\REQUIRE{A connected graph $G=(V,E)$ with $n$ vertices and a vertex~$u\in V$.}
\ENSURE{An LDFS ordering $\sigma_u$ of the vertices of $G$.}

\medskip

\STATE{Assign the label $\varepsilon$ to all vertices and mark all vertices as unnumbered}
\STATE{label$(u) \leftarrow \{0\}$}

\FOR{$i=1$ to $n$}
     \STATE{Pick an unnumbered vertex $v$ with the lexicographically largest label} \label{ldfs-alg-star}
     \STATE{$\sigma_u(i) \leftarrow v$} \hfill\COMMENT{assign to $v$ the number $i$; $v$ is now numbered}
     \FOR{each unnumbered vertex $w\in N(v)$}
          \STATE{prepend $i$ to label$(w)$}
     \ENDFOR
\ENDFOR

\medskip

\RETURN{the ordering $\sigma_{u} = (\sigma_{u}(1),\sigma_{u}(2),\ldots,\sigma_{u}(n))$} 
\end{algorithmic}
\end{algorithm}

The execution of the LDFS algorithm is illustrated with the running example of \cref{ldfs-fig-antihole}. 
In this example, suppose that the LDFS algorithm starts at vertex $a$. 
Suppose that LDFS chooses vertex $c$ next. 
Now, ordinary DFS could choose either~$e$ or $f$ next, but LDFS has to choose~$e$, 
since $e$ has a greater label than $f$ ($e$ is a neighbor of the previously visited vertex $a$). 
The next visited vertex has to be $b$, since it is the only unvisited neighbor of $e$. 
The vertex following~$b$ in the LDFS ordering $\sigma_{a}$ must
be~$f$ rather than~$d$, since~$f$ has a greater label than~$d$ 
($f$~is a neighbor of vertex~$c$ which has been visited more recently than $d$'s neighbor $a$). 
Finally LDFS visits the last vertex $d$, completing the LDFS ordering as $\sigma_{a} = (a,c,e,b,f,d)$.

It is important here to connect the vertex ordering $\sigma_u$ that is returned by the LDFS algorithm (i.e.~\cref{ldfs-alg}) with the notion of an LDFS ordering, as defined in \cref{def:ldfs-ordering}. 
The next theorem due to Corneil and Krueger~\cite{Corneil-LDFS08} shows that a vertex ordering $\sigma$ of an arbitrary graph $G$ can be returned by an application of the LDFS algorithm to $G$ (starting at some vertex $u$ of $G$) if and only if $\sigma$ is an LDFS ordering.

\begin{theorem}[\hspace{-0.0005cm}\protect\cite{Corneil-LDFS08}]
For an arbitrary graph $G=(V,E)$, an ordering $\sigma$ of $V$ can be
returned by an application of \cref{ldfs-alg} to $G$ if and only if 
$\sigma$ is an LDFS ordering.
\end{theorem}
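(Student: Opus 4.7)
My plan is to isolate a single combinatorial characterization of the LDFS labels and then use it symmetrically for both directions. At any moment during the execution, the label of an unnumbered vertex $v$ is precisely the set of step indices at which previously-numbered neighbors of $v$ were chosen, written as a decreasing sequence (the special initial label $\{0\}$ of the start vertex is only relevant at step $1$). Since these sequences are strictly decreasing from left to right, comparing the labels of two unnumbered vertices $b$ and $c$ lexicographically amounts to looking at the latest previously-numbered vertex in the symmetric difference $N(b) \triangle N(c)$: if it lies in $N(b)$ then $b$'s label is larger, if it lies in $N(c)$ then $c$'s label is larger, and if the symmetric difference (restricted to numbered vertices) is empty, the labels are equal. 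Verifying this characterization carefully is, I expect, the main technical obstacle of the proof; once it is in hand, both implications follow from short arguments.

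For the forward direction, suppose $\sigma$ is produced by LDFS and, towards a contradiction, that there is a bad triple $(a,b,c)$ with $a <_\sigma b <_\sigma c$, $\{a,c\} \in E$, and $\{a,b\} \notin E$. At the step at which $b$ is selected, $a$ has already been numbered and lies in $N(c) \setminus N(b)$, so the symmetric difference is nonempty and the two labels differ. Since LDFS chose $b$ over $c$, the label of $b$ must be strictly larger, and by the characterization there exists a vertex $d$ numbered strictly between $a$ and $b$ with $\{d,b\} \in E$ and $\{d,c\} \notin E$. But such a $d$ is exactly what is required to make $(a,b,c)$ a good triple, contradicting the hypothesis.

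For the converse direction, suppose $\sigma$ has no bad triples and run \cref{ldfs-alg} starting from $u = \sigma(1)$, breaking ties greedily in favour of $\sigma$. I will argue by induction on $i$ that the vertex $\sigma(i)$ has a lexicographically maximum label at step $i$ and can therefore be legitimately selected; the base case $i=1$ is immediate because $\sigma(1)$ uniquely carries the label $\{0\}$. For the inductive step, set $b = \sigma(i)$ and suppose, for contradiction, that some unnumbered vertex $c$ has a strictly larger label. Then $b <_\sigma c$ and, by the characterization, the latest previously-numbered vertex $a$ in $N(b) \triangle N(c)$ belongs to $N(c) \setminus N(b)$, yielding $a <_\sigma b <_\sigma c$ with $\{a,c\} \in E$ and $\{a,b\} \notin E$. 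Since $\sigma$ has no bad triples, $(a,b,c)$ is good, so there exists $d$ with $a <_\sigma d <_\sigma b$, $\{d,b\} \in E$, and $\{d,c\} \notin E$. But this $d$ has been numbered after $a$ and lies in the symmetric difference, contradicting the maximality in the choice of $a$. Hence $b$ has a lex-maximum label and can be chosen at step $i$, completing the induction and the proof.
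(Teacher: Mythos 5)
Your proof is correct. Note first that the paper itself gives no proof of this statement: it is quoted verbatim from Corneil and Krueger \cite{Corneil-LDFS08}, so there is no in-paper argument to compare against. Your self-contained argument follows the standard route and all steps check out. The load-bearing claim is exactly the one you flag: the label of an unnumbered vertex is the decreasing sequence of step indices of its already-numbered neighbors, so lexicographic comparison of two such labels reduces to locating the most recently numbered vertex in $N(b)\,\triangle\,N(c)$ (with the prefix case, where the longer label wins, correctly absorbed into this rule, and the special label $\{0\}$ of the start vertex irrelevant after step $1$). Given that, the forward direction correctly observes that $a\in N(c)\setminus N(b)$ forces the labels to be \emph{unequal}, hence $b$'s label is strictly larger and the witness $d$ (the latest-numbered element of the symmetric difference, which is distinct from $a$ and numbered after it but before $b$) is precisely the vertex required by \cref{def:good-bad-triple}; and the converse induction correctly derives a contradiction with the maximality of $a$ from the vertex $d$ supplied by the good-triple property, since $d<_\sigma b$ guarantees $d$ is already numbered at step $i$. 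The only cosmetic mismatch is that \cref{ldfs-alg} is stated for connected graphs while the theorem speaks of arbitrary graphs; your argument is insensitive to this, as ties among empty labels are handled by the same reasoning.
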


In the generic LDFS, there can be some choices to be made at \cref{ldfs-alg-star} of \cref{ldfs-alg}. 
More specifically, at some iteration there may be two or more vertices that have the same label; in this case the algorithm must break ties and choose one of these vertices. 
Generic LDFS (i.e.~\cref{ldfs-alg}) allows an arbitrary choice here. 
We present in the following a special type of an LDFS algorithm, called
LDFS$^{+}$ (see~\cref{ldfs+-alg} below), 
which chooses a specific vertex in such a case of equal labels, as follows. Along with
the graph $G=(V,E)$, an ordering $\pi$ of $V$ is also given as input. 
The algorithm LDFS$^{+}$ operates exactly as a generic LDFS that starts at the \emph{rightmost} vertex 
of $V$ in the ordering $\pi$, with the only difference that, in the case where 
at some iteration at least two unvisited vertices have the same label, LDFS$^{+}$ chooses the 
\emph{rightmost} vertex among them in the input ordering $\pi$. 
The resulting ordering is then denoted $\sigma ={}$LDFS$^{+}(G,\pi)$.

\begin{algorithm}[thb]
\caption{LDFS$^+$ ($G,\pi$)} \label{ldfs+-alg}
\begin{algorithmic}[1]
\REQUIRE{A connected graph $G=(V,E)$ with $n$ vertices and an ordering $\pi$ of $V$.}
\ENSURE{An LDFS ordering $\sigma$ of the vertices of $G$.}

\medskip

\STATE{Assign the label $\varepsilon$ to all vertices and mark all vertices as unnumbered}

\FOR{$i=1$ to $n$}
     \STATE{Pick the rightmost vertex $v$ in $\pi$ among the unnumbered vertices with the  lexicographically largest label} \label{ldfs+-alg-star}
     \STATE{$\sigma(i) \leftarrow v$} \hfill\COMMENT{assign to $v$ the number $i$; $v$ is now numbered}
     \FOR{each unnumbered vertex $w\in N(v)$}
          \STATE{prepend $i$ to label$(w)$}
     \ENDFOR
\ENDFOR

\medskip

\RETURN{the ordering $\sigma = (\sigma(1),\sigma(2),\ldots,\sigma(n))$}
\end{algorithmic}
\end{algorithm}

Consider our running example of \cref{ldfs-fig-antihole}. 
In this graph $G=\overline{C_6}$, suppose that LDFS$^+$ is given as input the 
umbrella-free ordering $\pi=(b,d,c,f,e,a)$. Then the ordering $\sigma ={}$LDFS$^{+}(G,\pi)$ 
is computed using \cref{ldfs+-alg} as follows. 
The first visited vertex is $a$, since $a$ is the rightmost vertex in the ordering $\pi$. 
Now, LDFS (see \cref{ldfs-alg}) could choose any of the neighbors $c,d,e$ of $a$ next, 
but LDFS$^+$ (see \cref{ldfs+-alg}) has to choose~$e$, 
since $e$ is the rightmost among these vertices in the ordering $\pi$. 
In this example there exists no further tie among vertices with the same label. 
Thus, proceeding similarly to our example vertex ordering for LDFS above, 
it follows that the resulting ordering $\sigma ={}$LDFS$^{+}(G,\pi)$ is $\sigma = (a,e,c,f,b,d)$.

For the purposes of our algorithm RMM for computing a maximum matching on cocomparability graphs in \Cref{subsec:cocomp-alg}, 
we will consider an arbitrary umbrella-free vertex ordering $\pi$ of the input cocomparability graph $G$, 
and we will then compute the LDFS ordering $\sigma ={}$LDFS$^{+}(G,\pi)$, 
by applying \cref{ldfs+-alg} (i.e.~LDFS$^{+}$) to $\pi$. 
Our RMM algorithm (see \cref{alg:RMM}) will then take this LDFS ordering $\sigma$ as input, 
together with the graph $G$.
It is important to note here that, starting from an umbrella-free ordering $\pi$, 
the LDFS vertex ordering $\sigma ={}$LDFS$^{+}(G,\pi)$ remains umbrella-free~\cite{CorneilDH-MPC}. 
That is, $\sigma$ satisfies both the conditions of \cref{def:umbrella-free} and 
\cref{def:ldfs-ordering}, and thus $\sigma$ is \emph{simultaneously} an LDFS ordering 
and an umbrella-free ordering. For this reason we refer to $\sigma$ as an \emph{LDFS umbrella-free} 
vertex ordering of the input cocomparability graph $G$.
Finally note that, given an umbrella-free ordering $\pi$ of a cocomparability graph $G$ with $n$ vertices 
and $m$ edges, the ordering LDFS$^{+}(G,\pi)$ can be computed in $O(n+m)$ time~\cite{KohlerM14}.

\subsection{The algorithm for cocomparability graphs} \label{subsec:cocomp-alg}

Once we have computed in $O(n+m)$ time the LDFS umbrella-free ordering~$\sigma ={}$LDFS$^{+}(G,\pi)$, 
we apply our simple linear-time algorithm \emph{Rightmost-Matching (RMM)}, see~\cref{alg:RMM}, 
to compute a new vertex ordering $\widehat{\sigma}$ and a maximum matching $M$ of $G$. 
RMM is a simple greedy algorithm which operates as follows. 
At every step it visits the rightmost unvisited vertex~$x$ in $\sigma$ and it labels $x$ as visited. 
Then, if $x$ does not have any unvisited neighbor, then RMM proceeds at the next step by visiting again 
the currently unvisited vertex in $\sigma$; 
note that this vertex is now different from~$x$, as~$x$ has been already labeled as visited. 
Otherwise, if~$x$ has at least one unvisited neighbor, then RMM visits after $x$ its rightmost unvisited 
neighbor~$y$ in the ordering $\sigma$ and it also adds the edge $\{x,y\}$ to the computed matching $M$. 

\begin{algorithm}[h!]
	\caption{RMM($G$, $\sigma$).}
	\label[algorithm]{alg:RMM}
	\begin{algorithmic}[1]
    \REQUIRE{A cocomparability graph $G$ with an LDFS umbrella-free ordering $\sigma$ of $G$.}
    \ENSURE{A vertex ordering $\widehat{\sigma}$ of $G$ and a maximum matching of $G$.}

    \medskip
	
		\STATE\label[line]{line:init}{Label all vertices ``unvisited''; \ $i \leftarrow 0$; \ $M \leftarrow \emptyset$ \;}
		\WHILE{there are unvisited vertices\label[line]{line:while}}
			\STATE{Pick the rightmost unvisited vertex~$x$ in~$\sigma$ and label~$x$ as ``visited'' \; \label[line]{line:pick-rightmost-x}}
			\STATE\label[line]{line:add-x}{$i \leftarrow i+1$; \ $\widehat{\sigma}(i) \leftarrow x$ \hfill \COMMENT{add vertex $x$ to the ordering $\widehat{\sigma}$}}
			\IF{$x$ has at least one unvisited neighbor\label[line]{line:if-unvisited-neighbor}} 
			\STATE\label[line]{line:pick-rightmost-y}{Pick the rightmost unvisited neighbor~$y$ of~$x$ and label~$y$ as ``visited'' \; }
			\STATE\label[line]{line:add-y}{$i \leftarrow i+1$; \ $\widehat{\sigma}(i) \leftarrow y$ \hfill \COMMENT{add vertex $y$ to the ordering $\widehat{\sigma}$}}
			\STATE\label[line]{line:match-xy}{$M \gets M \cup \{\{x,y\}\}$ \hfill \COMMENT{match $x$ and $y$}}
			\ENDIF
		\ENDWHILE
		\RETURN\label[line]{line:return} the ordering $\widehat{\sigma}$ and the matching $M$ 
	\end{algorithmic}
\end{algorithm}

\begin{remark}
Since any I-ordering of an interval graph is also an LDFS umbrella-free ordering (see~\cref{lem:left-ordering-interval,def:umbrella-free,def:good-bad-triple,def:ldfs-ordering}), 
note that~\cref{alg:RMM} also works
with an interval graph $G$ and an I-ordering~$\sigma$ of~$G$ as input.
In this case, RMM($G$,\,$\sigma$) is actually exactly the same RMM algorithm as we sketched in \Cref{subsec:interval-alg} for interval graphs. 
\end{remark}

\begin{remark}
Essentially the same greedy approach as our RMM algorithm was already considered by Dragan~\cite{Dragan97}.%
\footnote{With the only difference that Dragan's algorithm visits the vertices from left to right and always matches a vertex with its leftmost unvisited neighbor.}
More specifically, he characterized those graphs $G$ which admit a vertex ordering~$\tau$ such that the greedy algorithm computes a maximum matching on \emph{every induced subgraph}~$F$ of~$G$ when applied to the induced sub-ordering of $\tau$ on the vertices of $F$. 
These graphs $G$ having the above property are called \emph{greedy matchable} graphs~\cite{Dragan97}. 
We prove that cocomparability graphs admit a vertex ordering~$\sigma$ (namely an LDFS umbrella-free ordering) such that the greedy algorithm computes a maximum matching in the input graph $G$ itself (and \emph{not} in every induced subgraph of $G$). 
That is, Dragan~\cite{Dragan97} studied a problem that is very different 
from computing a maximum matching in a given graph.

Dragan~\cite{Dragan97} proved that greedy matchable graphs form a subclass of weakly triangulated graphs; a graph is weakly triangulated if it neither contains (as an induced subgraph) a chordless cycle of length at least five nor the complement of such a chordless cycle. 
On the contrary, cocomparability graphs are not a subclass of weakly triangulated graphs since, for every $k\geq 3$, the complement $\overline{C_{2k}}$ of a chordless cycle with length $2k$ is a cocomparability graph. 
Indeed, the complement of a $\overline{C_{2k}}$ (i.e.~the chordless cycle $C_{2k}$) can be transitively oriented. Therefore, our results do not follow from the paper of Dragan~\cite{Dragan97}.

More specifically, one of the main results of Dragan (see Theorem~1 in~\cite{Dragan97}) is that a graph $G$ is greedy matchable if and only if $G$ admits an \emph{admissible vertex ordering} (as defined in Definition~3 of~\cite{Dragan97}). 
Admissible orderings are characterized by the nine forbidden sub-orderings as shown in Figure~1 of Dragan~\cite{Dragan97}. 
However, three of these forbidden sub-orderings (namely the 2$^{nd}$, the 5$^{th}$, and the 9$^{th}$ one) are in fact LDFS umbrella-free orderings (see \cref{def:umbrella-free,def:ldfs-ordering} of our paper). 
To see this, observe that each of these three orderings 
(i)~is umbrella-free (see our \cref{def:umbrella-free}) and 
(ii)~does not contain any triple $a,b,c$ of vertices such that $a<_{\sigma}b<_{\sigma}c$, $\{a,c\}\in E$, 
and $\{a,b\}\notin E$ (see our \cref{def:good-bad-triple,def:ldfs-ordering}).

To illustrate this with an example, consider the graph $G=\overline{C_{6}}$ of our \cref{ldfs-fig-antihole} and recall from \Cref{subsec:vertex-orderings} that $\sigma = (a,e,c,f,b,d)$ is an LDFS umbrella-free vertex ordering of $G$. 
Note that this ordering $\sigma$ contains the orderings $(a,e,c,d)$ and $(a,e,c,b)$ as induced sub-orderings. 
Furthermore note that these sub-orderings correspond to the 2$^{nd}$ and the 5$^{th}$ forbidden sub-orderings of Figure~1 in Dragan's paper~\cite{Dragan97}, respectively. 
Hence, $\sigma$ is an example of an LDFS umbrella-free ordering which is not an admissible ordering; 
this is an alternative explanation of why our results do not follow from Dragan's paper~\cite{Dragan97}.
\end{remark}

In the remainder of this section, we show that the matching $M$ returned by RMM$(G,\sigma)$ is indeed a maximum matching of $G$. 
The proof is by contradiction and uses an appropriate \emph{potential function} $f$ that is defined over all matchings of $G$: 

\begin{definition}[potential function]
\label[definition]{def:potential}
	Let~${G = (V,E)}$ be a cocomparability graph and~$\sigma$ be an LDFS umbrella-free ordering of~$V = \{v_1, \ldots, v_n\}$ with~${v_1 <_\sigma \ldots <_\sigma v_n}$.
	Let~$M$ be~a matching of~$G$. 
	Then the potential function is~${f(M) := \sum_{i = 1}^n g_M(v_i)}$, where for each~$v_i\in V$:
	$$g_M(v_i) := \begin{cases}
						0, & \text{if }\{v_i,v_j\}\in M\text{ and }i<j, \\
						(i-j) \cdot (n+1)^i, & \text{if $\{v_i,v_j\}\in M$ and $j<i$}, \\
						i \cdot (n+1)^i, & \text{if }v_i\text{ is not matched within }M.
	            \end{cases}$$
\end{definition}

Note by~\cref{def:potential} that, for the empty matching, we have $f(\emptyset) = \sum_{i = 1}^n i \cdot (n+1)^i$. 
Then, as we add an edge $\{v_i,v_j\}$ to the current matching $M$, where $j<i$ and~$v_i$ and~$v_j$ are unmatched, we have that 
\begin{align*}
	f(M\cup \{\{v_i,v_j\}\}) 	& = f(M) - i(n+1)^i - j(n+1)^j + (i-j)(n+1)^i \\
							& = f(M) - j((n+1)^j + (n+1)^i) < f(M).
\end{align*}
Thus, adding edges to a matching decreases the potential function value.
The exponential dependency on the vertex-index in~$g_M$ ensures that matching vertices with higher index has a larger impact than matching vertices with lower index.
Furthermore, aiming at a small potential function value also means that the endpoints of the matched edges have only small index difference. 
We formalize this intuition in the next observation.
\begin{observation}
	\label{obs:potential-works-as-intended}
	Let~${G = (V,E)}$ be a graph and~$\sigma$ be an arbitrary ordering of~$V = \{v_1, \ldots, v_n\}$ with~${v_1 <_\sigma \ldots <_\sigma v_n}$.
	Let~$M$ and~$M'$ be two different matchings of~$G$ such that at~$v_i$ is the rightmost difference between~$M$ and~$M'$, that is, each vertex~$v_\ell$, $\ell > i$, is either free in both~$M$ and~$M'$ or matched with the same~$v_{\ell'}$ in both~$M$ and~$M'$.
	Suppose that: 
	\begin{itemize}
		\item $\{v_j,v_i\} \in M' \setminus M$, $j<i$, and
		\item $v_i$ is in~$M$ either free or matched to some~$v_{j'}$, $j' < j$.
	\end{itemize}
	Then, $f(M') < f(M)$.
\end{observation}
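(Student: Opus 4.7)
The plan is to estimate $f(M') - f(M)$ termwise, exploiting the fact that the contribution at~$v_i$ dominates all lower-indexed contributions thanks to the super-geometric weighting $(n+1)^i$ built into \cref{def:potential}. First I would observe that, by the hypothesis on the rightmost difference, $g_{M'}(v_\ell) = g_M(v_\ell)$ for every $\ell > i$, so these indices contribute nothing to $f(M') - f(M)$ and can be ignored.

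Next I would compute the contribution at $v_i$ directly from \cref{def:potential}. Since $\{v_j,v_i\} \in M'$ with $j<i$, we have $g_{M'}(v_i) = (i-j)(n+1)^i$. In the first sub-case of the hypothesis, $v_i$ is free in $M$, so $g_M(v_i) = i(n+1)^i$ and the difference equals $-j(n+1)^i \leq -(n+1)^i$ since $j \geq 1$. In the second sub-case, $v_i$ is matched with $v_{j'}$ for some $j' < j$, so $g_M(v_i) = (i-j')(n+1)^i$ and the difference equals $(j'-j)(n+1)^i \leq -(n+1)^i$ since $j' - j \leq -1$. Thus in either case
\[
g_{M'}(v_i) - g_M(v_i) \;\leq\; -(n+1)^i.
\]

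Then I would bound the (possibly positive) net change from indices $\ell < i$ by a crude uniform estimate. By \cref{def:potential}, each $g$-value satisfies $0 \leq g_M(v_\ell), g_{M'}(v_\ell) \leq \ell(n+1)^\ell$, with the maximum attained precisely when $v_\ell$ is free. Hence $|g_{M'}(v_\ell) - g_M(v_\ell)| \leq \ell(n+1)^\ell$, and using $\ell \leq n$ together with the geometric series identity I would bound the total by
\[
\sum_{\ell=1}^{i-1} \ell(n+1)^\ell \;\leq\; n \sum_{\ell=1}^{i-1} (n+1)^\ell \;=\; (n+1)^i - (n+1) \;<\; (n+1)^i.
\]

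Combining the two estimates gives $f(M') - f(M) < -(n+1)^i + (n+1)^i = 0$, so $f(M') < f(M)$ as claimed. The only mildly delicate point is the case analysis at $v_i$: it is essential that in the \emph{matched} sub-case the $M$-partner $v_{j'}$ of $v_i$ lies strictly to the left of $v_j$, since otherwise the contribution at $v_i$ might fail to decrease. Everything else is routine bookkeeping enabled by the base-$(n+1)$ weighting, which ensures that a single unit of change at position $i$ outweighs any configuration of changes at positions $< i$.
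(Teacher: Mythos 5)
Your proof is correct and follows essentially the same route as the paper's: isolate the dominant term at $v_i$, which drops by at least $(n+1)^i$ in both sub-cases, and absorb all lower-indexed contributions via the geometric-series bound $\sum_{\ell=1}^{i-1}\ell(n+1)^\ell < (n+1)^i$. The only cosmetic difference is that you bound $\lvert g_{M'}(v_\ell)-g_M(v_\ell)\rvert$ symmetrically while the paper simply discards the nonnegative terms $g_M(v_k)$; both yield the same estimate.
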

\begin{proof}
	We have
 	\begin{align*}
		f(M') - f(M) 	& = \sum_{k=1}^n g_{M'}(v_k) - g_{M}(v_k) = \sum_{i=k}^{i} g_{M'}(v_k) - g_{M}(v_k) 
	\end{align*}
	as by assumption~$v_i$ is the rightmost vertex where~$M$ and~$M'$ differ.
	Then, 
	\begin{align*}
		f(M') - f(M)	& = g_{M'}(v_i) - g_{M}(v_i) + \sum_{k=1}^{i-1} g_{M'}(v_k) - g_{M}(v_k) \\
						& <  (i-j)(n+1)^i - (i-x)(n+1)^i + \sum_{k=1}^{i-1} g_{M'}(v_k),
	\end{align*}
	where~$x=0$ if~$v_i$ is free in~$M$ or $x = j'$ if~$v_i$ is matched to~$v_{j'}$ in~$M$.
	In both cases we have~$j > x$ and thus
	\begin{align*}
		f(M') - f(M)	& < -(n+1)^i + \sum_{k=1}^{i-1} g_{M'}(v_k) < -(n+1)^i + \sum_{k=1}^{i-1} n(n+1)^k \\
						& = -(n+1)^i + n\frac{(n+1)^i-1}{n} - 1 \\
						& = -(n+1)^i + (n+1)^i-2 < 0. \\
	\end{align*}
\end{proof}

With the above observation the connection between the RMM algorithm and the potential function~$f$ is easy to see:

\begin{observation}
The matching $M$ returned by RMM$(G,\sigma)$ minimizes the function $f(M)$.
\end{observation}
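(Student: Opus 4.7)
The plan is to prove the claim by contradiction. Suppose that some matching $M^{*}$ of $G$ satisfies $f(M^{*}) < f(M)$, and let $v_{i}$ denote the largest-indexed vertex at which $M$ and $M^{*}$ differ. Since every vertex $v_{\ell}$ with $\ell > i$ has the same partner, or is free, in both matchings, the vertex $v_{i}$ cannot be matched to any $v_{\ell}$ with $\ell > i$ in either matching. Consequently, in each of $M$ and $M^{*}$ the vertex $v_{i}$ is either free or matched to a vertex of strictly smaller index, and these two local descriptions must differ. For brevity, I call the \emph{match-index of $v_{i}$} in a matching the value $0$ if $v_{i}$ is free there, and the value $j$ if $v_{i}$ is matched to $v_{j}$ with $j<i$.

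The easy case is when the match-index of $v_{i}$ in $M$ is strictly larger than that in $M^{*}$. Applying \cref{obs:potential-works-as-intended} with the roles of its $M'$ and $M$ played by our $M$ and $M^{*}$ respectively, the hypotheses of the observation are satisfied and its conclusion yields $f(M) < f(M^{*})$, contradicting the standing assumption.

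The main obstacle is the opposite case: $v_{i}$ is matched in $M^{*}$ to some $v_{j'}$ with $j' < i$, while in $M$ the vertex $v_{i}$ is either free or matched to some $v_{j}$ with $j < j'$. Here \cref{obs:potential-works-as-intended} runs in the wrong direction, so I would invoke the greedy rule of RMM. Examine the iteration in which RMM processed $v_{i}$ as the rightmost unvisited vertex. Since $\{v_{i},v_{j'}\}\in E$, the vertex $v_{j'}$ is a neighbor of $v_{i}$; had $v_{j'}$ been unvisited at that iteration, RMM would have matched $v_{i}$ with $v_{j'}$ rather than with $v_{j}$ or leaving $v_{i}$ free (because $j' > j$ and RMM picks the rightmost unvisited neighbor). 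Hence $v_{j'}$ was visited at a strictly earlier iteration; moreover, no vertex of index less than $i$ can be picked as the rightmost-unvisited $x$ while $v_{i}$ is still unvisited, so $v_{j'}$ must have been visited as the $y$-partner of some $x=v_{k}$ with $k > j'$ in a previous iteration.

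The argument then concludes by ruling out every value of $k$. If $k > i$, then $v_{k}$ lies in the region where $M$ and $M^{*}$ agree, so its $M$-partner $v_{j'}$ is also its $M^{*}$-partner; but in $M^{*}$ the vertex $v_{j'}$ is already matched to $v_{i}\ne v_{k}$, a contradiction. If $k \le i$, then $v_{k}\ne v_{i}$ because $v_{i}$ was still unvisited at the iteration processing $v_{k}$, so $k < i$; but then $v_{i}$ has higher index than $v_{k}$ and is unvisited, contradicting the fact that $v_{k}$ was picked as the rightmost unvisited vertex in that iteration. Both sub-cases are impossible, which completes the proof by contradiction. Note that the argument uses only RMM's greedy rule and the resulting visit order; neither umbrella-freeness nor the LDFS property of $\sigma$ is invoked here, as these enter only in the subsequent argument that the $f$-minimiser is a maximum matching.
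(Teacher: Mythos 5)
Your proof is correct, but it takes a different route from the paper's. The paper starts from an $f$-minimizing matching $M'$ and shows, via one application of \cref{obs:potential-works-as-intended} followed by recursion on the induced subgraph with the two newly matched vertices removed, that $M'$ must replicate RMM's greedy choices from the rightmost vertex downwards, and hence coincides with RMM's output. You instead fix RMM's output $M$, take an arbitrary competitor $M^{*}$ with $f(M^{*})<f(M)$, and argue at the rightmost difference $v_i$: your Case~1 is exactly the exchange of \cref{obs:potential-works-as-intended} (with the roles of the two matchings swapped), while your Case~2 --- ruling out that $M^{*}$ matches $v_i$ to a higher-indexed partner $v_{j'}$ than $M$ does --- is a genuinely additional structural argument from RMM's visit order: $v_{j'}$ must already have been consumed as the $y$-partner of some $v_k$ with $k>i$, which would force $v_{j'}$ to have two partners in $M^{*}$. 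The paper's recursion sidesteps your Case~2 by construction, since it fixes RMM's choices top-down before ever reaching $v_i$; your version makes that step explicit, avoids the slightly delicate re-indexing of the potential function on the induced subgraph, and yields the strict inequality $f(M)<f(M^{*})$ for every matching $M^{*}\neq M$, i.e., uniqueness of the minimizer. One phrasing nit in your $k\le i$ sub-case: the cleanest reason that $v_k\neq v_i$ is that $\{v_k,v_{j'}\}\in M$ whereas in Case~2 the vertex $v_i$ is free in $M$ or matched to $v_j\neq v_{j'}$; your stated reason (``$v_i$ was still unvisited at the iteration processing $v_k$'') does not by itself exclude $v_k=v_i$.
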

\begin{proof}
	Let $M'$ be a matching such that $f(M')$ is minimum. 
	Consider the rightmost vertex $v_n$ in the ordering $\sigma$ and let $v_i$ be the rightmost neighbor of $v_n$ in $\sigma$. 
	Assume that $\{v_i,v_n\}\notin M'$. 
	Then let~$M''$ be the matching obtained by removing from~$M'$ any edges with endpoints~$v_i$ or~$v_n$, and by adding to it the edge~$\{v_i,v_n\}$. 
	By~\cref{obs:potential-works-as-intended}, we have $f(M'')<f(M')$, a contradiction.
	Thus $\{v_i,v_n\}\in M'$. 
	We can now recursively apply the same argument in the induced subgraph $G[(V \setminus \{v_i\}\setminus ) \{v_n\}]$, 
	which eventually implies that $M'$ is the matching returned by RMM$(G,\sigma)$.
\end{proof}

Before we prove our main result in~\cref{thm:rmm-returns-max-matching}, we need to prove a crucial technical lemma (\cref{lem:excluding-special-structure}). 
On a high level, our proof strategy is as follows: 
We consider a maximum matching~$M$ minimizing~$f$ and a matching~$M'$ produced by \cref{alg:RMM}.
If~$M=M'$, then we are done. 
Otherwise, we take the ``rightmost'' difference between~$M$ and~$M'$, that is the rightmost vertex~$v$ in~$M$ that is not matched in the same way in~$M'$ (or~$v$ is free in exactly one of the two matchings).
Then we show that matching~$v$ in~$M$ as in~$M'$ leads to another maximum matching~$M''$ such that~$f(M'') < f(M)$. 
To show this, we make a case distinction where in two cases we need to exclude the special scenario described in \cref{lem:excluding-special-structure}.
The existence of~$M''$ would be a contradiction to our choice of~$M$.
This shows that~$M=M'$.

In the next definition we introduce for every vertex $v$ the induced subgraph $G_{\sigma}(v)$ with respect to the ordering $\sigma$, which is fundamental for the statement and the proof of~\cref{lem:excluding-special-structure}.

\begin{definition}
	\label[definition]{def:left-of-x}
	Let $G=(V,E)$ be a cocomparability graph and let $\sigma=(v_1,v_2,\ldots,v_n)$ be an LDFS umbrella-free vertex ordering of~$G$. 
	Then, for every $v_i\in V$, the graph $G_{\sigma}(v_i)$ is the induced subgraph of~$G$ on the vertices $\{v_1,v_2,\ldots,v_i\}$.
\end{definition}

\begin{lemma}\label[lemma]{lem:excluding-special-structure}
	Let~$G = (V,E)$ be a cocomparability graph and~$\sigma$ be an LDFS umbrella-free ordering of~$V$.
	Let~$M$ be a maximum matching of $G$ such that $f(M)$ is minimum among all maximum matchings. 
	Then, there is no quadruple~$(a,b,c,x)$ of vertices in~$G$ satisfying all of the following six conditions:
	\begin{enumerate}
		\item $a <_\sigma b <_\sigma c \le_\sigma x$, \label[cond]{cond:ordering}
		\item $\{a,c\},\{b,c\}\in E$ and $\{a,b\} \notin E$, \label[cond]{cond:edges}
		\item $\{a,c\} \in M$,  \label[cond]{cond:matching-edge}
		\item there is no odd-length alternating path from~$a$ to~$b$ within~$G_{\sigma}(x)$, \label[cond]{cond:alternating-a-b-path}
		\item there is no odd-length alternating path from~$a$ to any free vertex~$v$ within~$G_{\sigma}(x)$, and \label[cond]{cond:alternating-a-d-path}
		\item there is no odd-length alternating path from~$b$ to any free vertex~$v$ within~$G_{\sigma}(x)$. \label[cond]{cond:alternating-b-d-path}
	\end{enumerate}
\end{lemma}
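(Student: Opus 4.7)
The plan is to argue by contradiction: assume that such a quadruple $(a,b,c,x)$ exists and derive an inconsistency either with $M$ being a minimum-$f$ maximum matching or with one of conditions~4--6.

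The first step is to extract additional structure from $\sigma$. Since $a<_\sigma b<_\sigma c$, $\{a,c\}\in E$, $\{a,b\}\notin E$, and $\sigma$ has no bad triple, $(a,b,c)$ must be a good triple (\cref{def:good-bad-triple,def:ldfs-ordering}), so there is a vertex $d$ with $a<_\sigma d<_\sigma b$, $\{d,b\}\in E$, and $\{d,c\}\notin E$. Applying umbrella-freeness (\cref{def:umbrella-free}) to the triple $a<_\sigma d<_\sigma c$ with $\{a,c\}\in E$ and $\{c,d\}\notin E$ yields $\{a,d\}\in E$. Hence $\{a,b,c,d\}$ induces a $C_{4}$ of the form $a\textrm{--}c\textrm{--}b\textrm{--}d\textrm{--}a$ with non-edges $\{a,b\}$ and $\{c,d\}$, and the chord $\{a,c\}$ is the only edge of this $C_{4}$ that we already know to be in $M$.

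Next, I would case-split on how $d$ and $b$ are matched in $M$, and dispatch the easy cases: \emph{(i)} if $d$ is free in $M$, then the single edge $\{b,d\}$ is an odd-length alternating path from $b$ to the free vertex $d$ inside $G_\sigma(x)$ (note $d<_\sigma b<_\sigma c\le_\sigma x$), contradicting condition~6; \emph{(ii)} if $\{b,d\}\in M$, replacing $\{a,c\},\{b,d\}$ by $\{a,d\},\{b,c\}$ (both in $E$) gives a matching $M^{\star}$ of the same cardinality, and since $c$ is the rightmost vertex where $M$ and $M^{\star}$ differ---matched to the higher-indexed $b$ in $M^{\star}$ instead of $a$ in $M$---\cref{obs:potential-works-as-intended} gives $f(M^{\star})<f(M)$, contradicting the minimality of $f(M)$; \emph{(iii)} if $d$ is matched to some $d'\neq b$ and $b$ is \emph{free}, then the plain swap $(M\setminus\{\{a,c\}\})\cup\{\{b,c\}\}$ is already a valid matching of the same cardinality, and again \cref{obs:potential-works-as-intended} yields $f<f(M)$, a contradiction.

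The substantive case, which I expect to be the main obstacle, is when $d$ is matched to some $d'\neq b$ \emph{and} $b$ is matched to some $b'\notin\{a,c,d\}$. No single local swap inside $\{a,b,c,d\}$ preserves the cardinality now: tentatively doing $\{a,c\}\to\{b,c\}$ together with $\{d,d'\}\to\{a,d\}$ and unmatching $\{b,b'\}$ drops the matching size by one, so the modification must be closed by an augmenting alternation that reroutes through the freed vertices $b'$ and $d'$. The plan is to show that every such closure would force one of the alternating-path configurations forbidden by conditions~4--6: either an odd alternating path from $a$ or from $b$ to a free vertex in $G_\sigma(x)$ (ruled out by conditions~5 and~6, respectively) or an odd alternating path connecting $a$ and $b$ (ruled out by condition~4). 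The delicate part will be carrying out this rerouting analysis cleanly; the intended strategy is to iterate the LDFS good-triple argument (together with umbrella-freeness) at the new endpoints $d'$ and $b'$, thereby extending any candidate alternating path inside $G_\sigma(x)$ step by step until it is forced to terminate at a free vertex or at $b$, producing the desired contradiction.
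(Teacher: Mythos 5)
Your opening moves are sound and essentially match the paper's: extracting $d$ from the good-triple property, deriving $\{a,d\}\in E$ from umbrella-freeness, and dispatching the cases where $d$ is free (violating \cref{cond:alternating-b-d-path}), where $\{b,d\}\in M$, and where $b$ is free (both via cardinality-preserving swaps and \cref{obs:potential-works-as-intended}) are all correct. But the case you yourself flag as ``the main obstacle'' is left as a plan rather than a proof, and the plan as stated will not go through. You propose a modification that drops the matching size by one and must be ``closed by an augmenting alternation,'' and then to iterate the good-triple argument at $d'$ and $b'$, ``extending any candidate alternating path step by step until it is forced to terminate.'' There is no well-foundedness argument here: nothing guarantees that the extension terminates, stays simple, stays inside $G_{\sigma}(x)$, or preserves alternation parity at each junction, and nothing forces the terminus to be a free vertex or $b$. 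Since this is precisely the content of the lemma, deferring it leaves the proof essentially incomplete.

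Two ideas from the paper's proof are missing, and they are what make the hard case tractable. First, an extremal choice: among all violating quadruples with the fixed $x$, pick one with $a$ leftmost in $\sigma$. Second, the decisive case analysis is not on how $b$ is matched (the paper never needs $b$'s partner at all) but on the position in $\sigma$ of $e$, the partner of $d$ in $M$. If $c<_\sigma e$, or if $a<_\sigma e<_\sigma c$, then umbrella-freeness (together with \cref{cond:alternating-a-b-path}, which rules out $\{a,e\}\in E$ via the alternating path $(a,e,d,b)$) forces $\{c,e\}\in E$, and the cardinality-preserving swap $\{a,c\},\{d,e\}\to\{a,d\},\{c,e\}$ strictly decreases $f$ by \cref{obs:potential-works-as-intended}, a contradiction. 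If instead $e<_\sigma a$, one verifies that the quadruple $(e,a,d,x)$ again satisfies all six conditions, contradicting the leftmost choice of $a$; this descent replaces your open-ended path extension with a clean finite induction. I recommend restructuring the hard case around these two devices rather than around augmenting-path surgery.
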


\begin{proof}
	Let~$G$, $\sigma$, and $M$ be as described in the statement of the lemma (see~\cref{fig:special-structure-for-lemma}).
	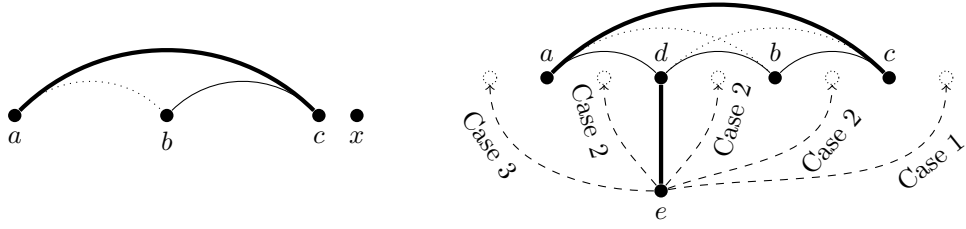
\begin{figure}
		\centering
		\begin{tikzpicture}

			\tikzstyle{node}=[minimum size=5pt,inner sep=0pt,fill=black,circle]
			\tikzstyle{edge}=[-,bend left=45]
			\tikzstyle{matched}=[-,ultra thick,bend left=45]
			\tikzstyle{nonedge}=[dotted,bend left=45]
			
			\node[node,label=below:$a$] at (0,0) (a) {};
			\node[node,label=below:$b$] at (2,0) (b) {};
			\node[node,label=below:$c$] at (4,0) (c) {};
			\node[node,label=below:$x$] at (4.5,0) (x) {};
			
			\draw[nonedge] (a) edge (b);
			\draw[matched] (a) edge (c);
			\draw[edge] (b) edge (c);

			\begin{scope}[xshift=7cm,yshift=0.5cm]
				\node[node,label=above:$a$] at (0,0) (a) {};
				\node[node,label=above:$d$] at (1.5,0) (d) {};
				\node[node,label=above:$b$] at (3,0) (b) {};
				\node[node,label=above:$c$] at (4.5,0) (c) {};

				\node[node,label=below:$e$] at (1.5,-1.5) (e) {};
				
				\draw[nonedge] (a) edge (b);
				\draw[matched] (a) edge (c);
				\draw[edge] (a) edge (d);
				\draw[edge] (b) edge (c);
				\draw[edge] (d) edge (b);
				\draw[nonedge] (d) edge (c);
				\draw[matched,bend left=0] (d) edge (e);

				\foreach \i / \pos / \out in { 1/5.25/10, 2/3.75/20, 2/2.25/50, 2/0.75/130, 3/-0.75/180 } {
					\node[minimum size=5pt,inner sep=0pt,draw=black,circle,densely dotted] at (\pos,0) (c\i) {};
					\draw[->,dashed] (e) to[out=\out, in=270, edge node={node [sloped,below,near end] {Case \i}}] (c\i) ;
				}
			\end{scope}
		\end{tikzpicture}
		\caption{Bold lines indicate matched edges, dotted lines indicate a non-edge. Left: Situation for invoking \cref{lem:excluding-special-structure} displaying condition \cref{cond:ordering,cond:edges,cond:matching-edge} where~$c \neq x$. Right: The case distinction in the proof of \cref{lem:excluding-special-structure} over the position of~$e$ in the order. }
		\label{fig:special-structure-for-lemma}
	\end{figure}
	The proof is by contradiction.
	Towards a contradiction let~$(a,b,c,x)$ be a quadruple of vertices satisfying all six conditions of the lemma. 
	Fix now vertex $x$. 
	Among all such quadruples with fixed $x$, let~$(a,b,c,x)$ be such that $a$ is leftmost in~$\sigma$, that is, for any other such quadruple~$(a',b',c',x)$ we have~$a \le_\sigma a'$. 
	Since~$\sigma$ is an LDFS ordering, it follows from \cref{cond:ordering,cond:edges} and \cref{def:good-bad-triple,def:ldfs-ordering} that there is a vertex~$d$ such that~$a <_{\sigma} d <_{\sigma} b$, $\{d,b\} \in E$, and $\{d,c\} \notin E$.
	Since~$\{d,c\} \notin E$ and $\sigma$ is umbrella-free, it follows that~$\{a,d\}\in E$.
	Observe that~$d$ is matched in~$M$ as otherwise \cref{cond:alternating-a-d-path,cond:alternating-b-d-path} would be violated.
	Thus, there is a vertex~$e \in V$ with~$\{e,d\} \in M$.
	Now we distinguish three cases with respect to the position of $e$ in the ordering $\sigma$.

	\textbf{Case 1: $c <_\sigma e$.} In this case we have that $a <_\sigma c <_\sigma e$, $\{d,e\}\in E$, and~$\{d,c\} \notin E$. Thus, since~$\sigma$ is umbrella-free, it follows that~$\{c,e\} \in E$.
	However, in this case for the matching~$M' = (M \setminus \{\{a,c\},\{d,e\}\}) \cup \{\{e,c\},\{a,d\}\}$ we invoke \cref{obs:potential-works-as-intended} with~$v_i = e$ to obtain~$f(M')<f(M)$, a contradiction.
	
	\textbf{Case 2: $a <_\sigma e <_\sigma c$.} If $\{a,e\}\in E$, then there exists the length-three alternating path 
	$(a,e,d,b)$ from $a$ to~$b$ within $G_{\sigma}(x)$, which is a contradiction to~\cref{cond:alternating-a-b-path}. Thus, $\{a,e\}\notin E$. Furthermore, $\{c,e\} \in E$, since $\sigma$ is umbrella-free and $\{a,c\} \in E$. Hence, for the matching~$M' = (M \setminus \{\{a,c\},\{d,e\}\}) \cup \{\{e,c\},\{a,d\}\}$ we invoke \cref{obs:potential-works-as-intended} with~$v_i = c$ to obtain~$f(M')<f(M)$, a contradiction.

	\textbf{Case 3: $e <_\sigma a$.} In this case it follows similarly to Case 2 that $\{a,e\}\notin E$ (proof by contradiction due to~\cref{cond:alternating-a-b-path}). 
	Furthermore observe that $\{e,d\},\{a,d\} \in E$ and~${\{e,d\} \in M}$. 
	Thus the triple~$(e,a,d)$ satisfies \cref{cond:ordering,cond:edges,cond:matching-edge}.
	Furthermore, if there exists an odd-length alternating path from~$e$ to~$a$ within $G_{\sigma}(x)$, then this alternating path can be extended through~$d$ to an odd-length alternating path from $a$ to~$b$ within $G_{\sigma}(x)$, which is a contradiction to \cref{cond:alternating-a-b-path}. 
	Hence there is no odd-length alternating path from~$e$ to~$a$ within $G_{\sigma}(x)$.
	Similarly, odd-length alternating paths from~$e$ (resp.~from~$a$) to a free vertex~$v$ within $G_{\sigma}(x)$ are excluded as well due to~\cref{cond:alternating-a-d-path} (resp.~due to~\cref{cond:alternating-b-d-path}). 
	Thus the quadruple~$(e,a,d,x)$ satisfies the six conditions of the lemma and it holds that~$e <_\sigma a$, a contradiction to the choice of the initial quadruple~$(a,b,c,x)$.
\end{proof}

We are now ready to prove our central result.

\begin{theorem}\label[theorem]{thm:rmm-returns-max-matching}
	For any $n$-vertex and $m$-edge cocomparability graph~$G$, \cref{alg:RMM} returns a maximum matching~$M$ of~$G$ in $O(n+m)$~time.
\end{theorem}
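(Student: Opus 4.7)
My plan is to first establish the $O(n+m)$ time bound and then to prove correctness by a minimal-counterexample argument that feeds \cref{lem:excluding-special-structure} a quadruple it rules out. For the time bound, the ordering $\sigma=\mathrm{LDFS}^{+}(G,\pi)$ is computed in $O(n+m)$ time by~\cite{KohlerM14}. For \cref{alg:RMM} itself, I would first bucket-sort every adjacency list according to $\sigma$ (total cost $O(n+m)$) and then maintain the unvisited vertices in a doubly linked list ordered by $\sigma$. Each iteration of the while loop removes the rightmost unvisited vertex $x$ from this list and scans $x$'s sorted adjacency list once to locate its rightmost unvisited neighbour $y$, so the total work is $O(n+\sum_{v}\deg(v))=O(n+m)$.

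For correctness, let $M^{*}$ be the matching returned by RMM and let $M$ be a maximum matching of $G$ that minimizes $f$ among all maximum matchings. By the observation preceding \cref{lem:excluding-special-structure}, $M^{*}$ in fact minimizes $f$ over \emph{all} matchings of $G$, so $f(M^{*})\le f(M)$. The plan is to prove $M=M^{*}$, which immediately forces $M^{*}$ to be maximum. Assuming towards contradiction that $M\ne M^{*}$, let $x$ be the rightmost vertex of $\sigma$ at which the two matchings differ in the sense of \cref{obs:potential-works-as-intended}. Since RMM processes the vertices from right to left and the two matchings agree on the suffix of $\sigma$ strictly to the right of $x$, the set of vertices that are ``unvisited'' when RMM processes $x$ is precisely the set of vertices $\le_{\sigma}x$ that are not matched in $M$ to a vertex $>_{\sigma}x$.

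I would then split into cases according to how RMM handles $x$. In the easy cases, a direct swap of the form $(M\setminus\{\{x,z\},\{y,w\}\})\cup\{\{x,y\},\{z,w\}\}$ (with simpler variants when $x$ or $y$ is free in $M$) yields another maximum matching $M''$ whose rightmost difference with $M$ sits strictly to the right of $x$; \cref{obs:potential-works-as-intended} then gives $f(M'')<f(M)$, contradicting the choice of $M$. In the hard case no such size-preserving swap exists, and the plan is to exhibit a quadruple $(a,b,c,x)$ satisfying all six conditions of \cref{lem:excluding-special-structure}: specifically $a$ is the $M$-partner of $x$, $b$ is the vertex $y$ that RMM picks to match with $x$, and $c=x$. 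The ordering, edge and matching-edge \cref{cond:ordering,cond:edges,cond:matching-edge} will follow from RMM's greedy rightmost choice together with umbrella-freeness of $\sigma$; the absence of an $M$-augmenting path from any free vertex (\cref{cond:alternating-a-d-path,cond:alternating-b-d-path}) will follow from maximality of $M$; and the absence of an odd-length alternating $a$-to-$b$ path (\cref{cond:alternating-a-b-path}) will follow from the case assumption, because such a path would yield precisely the size-preserving rearrangement that the hard case assumes away.

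The main technical obstacle will be verifying that the non-edge condition $\{a,b\}\notin E$ indeed holds---if $\{a,b\}\in E$ then one can construct an alternative size-preserving swap using $\{a,b\}$, reducing the situation to the easy case---and that the three alternating-path conditions genuinely follow from maximality and $f$-minimality of $M$ rather than from unchecked assumptions. Once the quadruple is produced, \cref{lem:excluding-special-structure} discharges the final contradiction, establishing $M=M^{*}$ and hence that RMM returns a maximum matching of $G$ in $O(n+m)$ time.
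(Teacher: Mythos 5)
Your running-time argument and the overall correctness skeleton (rightmost difference between RMM's output and an $f$-minimizing maximum matching, easy swap cases, then \cref{lem:excluding-special-structure} for the residual case) match the paper. However, the instantiation of the quadruple in the hard case is wrong, and this is where the actual work lies. Write $z$ for the $M$-partner of $x$ and $w$ for the $M$-partner of $y$ (the vertex RMM matches to $x$); the obstruction to the size-preserving rearrangement ``remove $\{x,z\}$ and $\{w,y\}$, add $\{x,y\}$'' is that \emph{$z$ and $w$} are left stranded. Accordingly, the two vertices playing the roles of $a$ and $b$ in the lemma must be $z$ and $w$, not $z$ and $y$ as you propose. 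For your quadruple $(z,y,x,x)$ neither \cref{cond:edges} nor \cref{cond:alternating-a-b-path} can be verified: if $\{z,y\}\in E$ you get no size-preserving swap (after rematching $x$ to $y$, the vertex $y$ is no longer available to absorb $z$, so the edge $\{z,y\}$ is useless), and an odd-length alternating path from $z$ to $y$ likewise does not produce a cardinality-preserving rearrangement that matches $x$ to $y$ (swapping it matches $z$ and $y$ inside the path and strands both $x$ and $w$, losing an edge). So the claimed reductions to the ``easy case'' fail, and the lemma cannot be invoked.

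What actually works, and what the paper does, is: first dispose of the sub-cases where $y$ is free in $M$ or $\{z,w\}\in E$ by direct swaps; then establish via the $f$-minimality of $M$ (not via its maximality alone --- neither $z$ nor $w$ is free, so these are not augmenting paths) that there is no odd-length alternating $z$--$w$ path and no odd-length alternating path from $z$ or from $w$ to a free vertex inside $G_{\sigma}(x)$, since any such path yields a cardinality-preserving rearrangement that matches $x$ to $y$ and hence decreases $f$ by \cref{obs:potential-works-as-intended}; and finally split on the relative order of $z$ and $w$, applying the lemma to $(w,z,y,x)$ with matched edge $\{w,y\}$ when $w<_\sigma z$, and to $(z,w,x,x)$ with matched edge $\{z,x\}$ when $z<_\sigma w$ (in each sub-case the missing edge $\{b,c\}$ of \cref{cond:edges} is supplied by umbrella-freeness from $\{a,c\}\in E$ and $\{z,w\}\notin E$). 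Your single quadruple with $c=x$ and $\{a,c\}=\{z,x\}$ covers at best one of these two sub-cases, and even there with the wrong $b$. This is a genuine gap rather than a presentational difference: without the correct choice of $b=w$ the six conditions of \cref{lem:excluding-special-structure} cannot all be certified, and the contradiction never materializes.
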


\begin{proof}
	Let~$G = (V,E)$ be a cocomparability graph, and let $\sigma$ be an umbrella-free LDFS ordering of $G$. 
	First we prove that \cref{alg:RMM} runs in $O(n+m)$ time. 
	To this end, we denote with~$\deg(v)$ the degree of a vertex~$v \in V$.
	During the execution of the algorithm we maintain the \emph{unvisited vertices} in a doubly linked list~$A$ (initially of size~$n$), according to their position in~$\sigma$. 
	Furthermore, we maintain for each vertex~$u$ its \emph{unvisited neighbors} in a doubly linked list~$N_u$ (initially of size~$\deg{(u)}$), again according to their position in~$\sigma$. 
	Once we have computed the ordering $\sigma$, the construction of the list~$A$ can be done in~$O(n)$ time. 
	The construction of all lists~$N_u$, where~$u\in V$, can be done in $O(n+m)$ time as follows. 
	We initialize~$N_u=\emptyset$ for every~$u\in V$. 
	Then we iterate for each vertex~$u\in V$ in the list~$A$ from left to right. 
	For every such vertex~$u$ we scan (in an arbitrary order) through its neighborhood~$N(u)$ (note that $N_u$ is at this point still incomplete), and for each $v\in N(u)$ we append vertex~$u$ in the list~$N_v$.

	\cref{line:init} can be clearly executed in $O(n)$ time. 
	The rightmost unvisited vertex $x$ in \cref{line:pick-rightmost-x} can be found in $O(1)$~time as the rightmost vertex in the list~$A$. 
	Once~$x$ is detected in \cref{line:pick-rightmost-x}, $x$ is removed from~$A$ also in $O(1)$~time. 
	Furthermore, $x$ is removed from all lists~$N_u$, where $\{x,u\}\in E$, in $O(\deg{(x)})$~time since~$x$ is always the last element in the respective list.
	Moreover, \cref{line:add-x} can be clearly executed in $O(1)$~time. 
	The if-condition of \cref{line:if-unvisited-neighbor} can be checked in $O(1)$ time by just checking whether the list~$N_x$ is empty. 
	Similarly to \cref{line:pick-rightmost-x}, \cref{line:pick-rightmost-y} can be executed in $O(\deg{(y)})$ time. 
	Furthermore, each of \cref{line:add-y,line:match-xy,line:return} can be clearly executed in $O(1)$ time. 
	Summarizing, the total running time of \cref{alg:RMM} is $O(n+\sum_{u\in V}\deg{(u)})=O(n+m)$. 
	
	For the correctness part, the proof is done by contradiction. 
	Let~$M$ be the matching returned by RMM$(G,\sigma)$.
	Assume towards a contradiction that~$M$ is not a maximum matching.
	For the rest of the proof, let~$M'$ denote a maximum matching that minimizes~$f(M')$ among all maximum matchings of $G$.
	Let~$x$ be the rightmost vertex in~$\sigma$ on which~$M$ differs from~$M'$. 
	Then $x$ is matched in at least one of the two matchings~$M$ and~$M'$. 
	Now we distinguish three cases with respect to the vertex that is matched with~$x$ in $M$ and $M'$.

	\textbf{Case 1: $x$ is matched in~$M'$ to some~$y \in V$ but is free in~$M$.} 
	Then $M$ and $M'$ also differ at vertex $y$. 
	Thus $y <_{\sigma} x$, since~$x$ is the rightmost vertex in which $M$ and~$M'$ differ. 
	Consider the iteration $t$ of \cref{alg:RMM} during which the algorithm visits $x$. 
	If~$y$ is free in~$M$, then this leads to a contradiction; 
	indeed, otherwise \cref{alg:RMM} would have matched~$x$ in iteration~$t$ as~$x$ has at least one unvisited neighbor, namely~$y$. %
	Hence, the vertex~$y$ is matched in~$M$ with a vertex~$z$ at an earlier iteration $t'<t$. 
	Then $M$ differs from~$M'$ also at vertex~$z$. 
	If $z<_{\sigma} x$, then \cref{alg:RMM} visits $x$ at an earlier iteration than~$z$, which is a contradiction to the assumption on~$z$. 
	Hence $x<_{\sigma} z$. 
	This is a contradiction to the assumption that $x$ is the rightmost vertex in $\sigma$ in which $M$ differs from~$M'$.

	\textbf{Case 2: $x$ is matched in~$M$ to some vertex~$y \in V$ but is free in~$M'$.} 
	If $y$ is free in~$M'$, then the matching $M'\cup \{\{x,y\}\}$ is larger than $M'$, which is a contradiction to the maximality assumption on $M'$. 
	Therefore $y$ is matched in~$M'$ to some vertex~$z \in V$. Note that $M$ and~$M'$ differ also on $y$ and $z$. Thus, it follows by the choice of~$x$ that~$y <_\sigma x$ and~$z <_\sigma x$. 
	Consider now the matching ${M'' := (M' \setminus \{\{y,z\}\}) \cup \{\{x,y\}\}}$, which is maximum since~${|M''|=|M'|}$. 
	However, invoking \cref{obs:potential-works-as-intended} with~$v_i = x$ yields~${f(M'') < f(M')}$, which is a contradiction to the assumption on the minimality of~$f(M')$.

	\textbf{Case 3: $\{x,y\} \in M$ and~$\{x,z\} \in M'$ with~$z \neq y$.}
	Then $M$ and~$M'$ differ also on $y$ and~$z$. Thus, it follows by the choice of~$x$ that~$y <_\sigma x$ and~$z <_\sigma x$. 
	Consider the iteration~$t$ of \cref{alg:RMM} during which the algorithm visits~$x$. 
	Suppose that vertex~$z$ is matched in~$M$ with a vertex~$p$ at an earlier iteration~$t'<t$. 
	Then $M$ differs from~$M'$ also at vertex~$p$. 
	If $p<_{\sigma} x$, then \cref{alg:RMM} visits~$x$ at an earlier iteration than~$p$, which is a contradiction to the assumption on~$p$. 
	If $x<_{\sigma} p$, then we have again a contradiction to the assumption that $x$ is the rightmost vertex in~$\sigma$ in which $M$ differs from~$M'$. 
	Thus $z$ is unmatched in $M$ at the iteration $t$ of \cref{alg:RMM} during which the algorithm visits~$x$. Furthermore, $z$ is also unvisited at iteration~$t$ since $z <_\sigma x$. 
	Now, if $y <_\sigma z$, then \cref{alg:RMM} would not match $x$ to $y$ at the execution of~\cref{line:pick-rightmost-y}, which is a contradiction. 
	Hence~$z <_\sigma y$.

	Suppose that~$y$ is free in~$M'$. 
	Then~${M'' := (M' \setminus \{\{x,z\}\}) \cup \{\{x,y\}\}}$ is another maximum matching.
	Invoking \cref{obs:potential-works-as-intended} with~$v_i = x$ yields~${f(M'') < f(M')}$, a contradiction to the assumption on~$M'$. 
	Hence, $y$ is matched in~$M'$ to some vertex~$w \in V$ with $w <_\sigma x$ by the choice of~$x$. 
	If~$\{w,z\}\in E$, then the matching~$M'' := (M' \setminus \{\{x,z\},\{w,y\}\}) \cup \{\{x,y\},\{z,w\}\}$ is another maximum matching. 
	Invoking \cref{obs:potential-works-as-intended} with~$v_i = x$ yields~${f(M'') < f(M')}$, which a contradiction to the choice of~$M'$. 
	Hence $\{z,w\} \notin E$.

	Suppose that within $G_{\sigma}(x)$ (\cref{def:left-of-x}) there exists an odd-length alternating path~$P_0$ with respect to~$M'$ from~$w$ to~$z$. 
	Let~$E_0$ be the edges in the path~$P_0$.
	Then, swapping in~$M'$ all edges on the path~$P_0$ (that is, replacing in~$M'$ the edges~$M'\cap E_0$ with the edges~$E_0 \setminus M'$), removing~$\{x,z\}$ and~$\{w,y\}$ from~$M'$, and adding~$\{x,y\}$ yields another maximum matching~$M''$. 
	Recall that $x$ is the rightmost vertex in which~$M$ and~$M'$ differ. 
	Thus, since the alternating path $P_0$ belongs to the induced subgraph $G_{\sigma}(x)$, it follows from \cref{obs:potential-works-as-intended} with~$v_i = x$ that~${f(M'') < f(M')}$, a contradiction to the choice of~$M'$. 
	Thus, within $G_{\sigma}(x)$ there exists no odd-length alternating path with respect to~$M'$ from~$w$ to~$z$.

	Similarly, suppose that within $G_{\sigma}(x)$ there exists an odd-length alternating path~$P_1$ with respect to~$M'$ from~$w$ (resp.~from~$z$) to a free vertex~$v$. 
	Then, swapping in~$M'$ all edges on the path $P_1$, removing~$\{x,z\}$ and~$\{w,y\}$ from~$M'$, and adding~$\{x,y\}$ yields another maximum matching~$M''$ for which \cref{obs:potential-works-as-intended} with~$v_i = x$ implies ~${f(M'') < f(M')}$, which is again a contradiction to the choice of~$M'$. 
	Thus there exists within $G_{\sigma}(x)$ no odd-length alternating path with respect to~$M'$ from~$w$ (resp.~from~$z$) to a free vertex~$v$.

	Now suppose that $w <_\sigma z$. 
	That is,~$w <_\sigma z <_\sigma y$, where $\{w,y\} \in E$ and~$\{z,w\}\notin E$. 
	Hence, $\{z,y\} \in E$ since~$\sigma$ is umbrella-free. 
	Thus, since $\{w,y\}\in M'$, it follows that the quadruple~$(w,z,y,x)$ satisfies all six conditions in the statement of \cref{lem:excluding-special-structure}. 
	This is a contradiction to \cref{lem:excluding-special-structure}, since $M'$ is assumed to be 
	a maximum matching of $G$ such that~$f(M')$ is minimum among all maximum matchings.

	Finally suppose that $z <_\sigma w$. 
	Recall that $M$ differs from $M'$ in $w$, since $\{y,w\}\in M'$ and $\{x,y\}\in M$. 
	Thus, since $x$ is the rightmost vertex in $\sigma$ in which $M$ differs from~$M'$, 
	it follows that $w <_\sigma x$. 
	That is,~$z <_\sigma w <_\sigma x$, where $\{x,z\} \in E$ and~$\{z,w\}\notin E$. 
	Hence, $\{w,x\} \in E$ since~$\sigma$ is umbrella-free. 
	Thus, since $\{x,z\}\in M'$, it follows that the quadruple~$(z,w,x,x)$ satisfies all six conditions in the statement of \cref{lem:excluding-special-structure}. 
	This is again a contradiction to \cref{lem:excluding-special-structure}, since $M'$ is assumed to be 
	a maximum matching of $G$ such that~$f(M')$ is minimum among all maximum matchings.

	Summarizing, the matching $M$ returned by \cref{alg:RMM} is a maximum matching.
\end{proof}

\section{Conclusion}
We presented a thorough mathematical analysis of an efficient and easy-to-implement linear-time greedy algorithm for computing maximum matchings on cocomparability graphs. 
This contributes to a long list of polynomial-time algorithms for problems on cocomparability graphs. 
Notably, most of this previous work showed polynomial-time (typically far from linear) algorithms for problems that are NP-hard on general graphs, while we improved a problem solvable in polynomial time on general graphs to linear time on cocomparability graphs.

Apart from being of interest on its own, our result might also be useful in a more general approach towards deriving faster algorithms for computing maximum matchings in relevant special cases. 
The fundamental idea behind this, as described in companion work~\cite{MNN17-MFCS}, is as follows.
First observe that, once a matching is given which has $k$~edges less than an optimal one, then using $k$~iterated augmenting path computations (each taking linear time~\cite{GT85}) one can improve it to a maximum matching. 
If for a graph~$G$ we also have a vertex subset set~$X$, $|X| = k$, such that~$G-X$ is a cocomparability graph, then for constant~$k$ we could get a linear-time algorithm for \Match as follows: 
First, delete the $k$~vertices from~$G$, then apply our linear-time algorithm, and then apply (as described above) at most $k$~iterations of augmenting path computations again with respect to the original graph~$G$, starting with the maximum matching for the cocomparability graph. 
A drawback of this approach is that we do not even know how to compute in linear-time a constant-factor approximation (which would be good enough) for the mentioned vertex deletion set of size~$k$. 
Hence, we consider it as an interesting challenge for future work to give a linear-time (constant-factor approximation) algorithm for computing a ``minimum-vertex-deletion-to-cocomparability'' set.
Based on the above considerations, for now we only can state the following result:

\begin{corollary}\label{thm:dist-to-cocomp-fptp}
        \Match can be solved in~$O(\distPara\cdot (n+m))$ time when given a size-$\distPara$ vertex set subset~$X$ such that deleting~$X$ from the given graph yields a cocomparability graph.
\end{corollary}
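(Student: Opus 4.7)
The plan is to follow essentially the recipe sketched in the paragraph preceding the corollary, filling in the bound that guarantees only $O(\distPara)$ augmenting-path iterations suffice. First I would compute the induced subgraph $G-X$, which by hypothesis is a cocomparability graph; then compute an umbrella-free ordering of $G-X$ in linear time via~\cite{McConnellSpinrad94}, pipe it through \text{LDFS}$^{+}$ in linear time via~\cite{KohlerM14}, and finally invoke \cref{alg:RMM} on $G-X$. By \cref{thm:rmm-returns-max-matching}, this yields in $O(n+m)$ time a maximum matching~$M_0$ of~$G-X$.

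Next I would quantify how far $M_0$ is from a maximum matching of $G$. The key observation is the following simple deficit bound: if $M^{*}$ is a maximum matching of $G$, then at most $\distPara$ edges of $M^{*}$ are incident to~$X$ (since each vertex of $X$ is covered by at most one edge of $M^{*}$), so $M^{*}$ restricted to edges of $G-X$ is a matching of $G-X$ of size at least $|M^{*}|-\distPara$. Consequently $|M_0| \ge |M^{*}|-\distPara$, i.e.\ $M_0$ is at most $\distPara$ augmenting paths short of optimality in $G$.

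Now I would run the standard augmenting-path loop directly on $G$ (not on $G-X$), starting from~$M_0$: while the current matching admits an augmenting path in $G$, find one and augment along it. By the classical characterization (cited in the preliminaries), the loop terminates exactly with a maximum matching of $G$; by the deficit bound, at most $\distPara$ iterations are performed; and each iteration runs in $O(n+m)$ time using the linear-time augmenting-path procedure of Gabow and Tarjan~\cite{GT85}. Summing, the total running time is $O(n+m)+O(\distPara \cdot (n+m)) = O(\distPara \cdot (n+m))$, as claimed.

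The only subtlety worth checking is that the algorithm of~\cite{GT85} can indeed be invoked as a black box on the original, unrestricted graph $G$ (which is not assumed to be cocomparability) and that no structural property of $G$ beyond what the augmenting-path procedure already requires is needed; all the cocomparability structure is exploited purely inside the single call to \cref{alg:RMM} on $G-X$. Note that \emph{constructing} the set $X$ in linear time is explicitly disclaimed in the discussion above the corollary, so $X$ is (correctly) taken as part of the input here.
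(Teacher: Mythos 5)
Your proposal is correct and follows essentially the same route the paper sketches in the paragraph preceding the corollary: compute a maximum matching of the cocomparability graph $G-X$ via \cref{alg:RMM} in linear time, then perform at most $\distPara$ linear-time augmenting-path computations~\cite{GT85} on the original graph~$G$. The deficit bound you supply (each vertex of $X$ destroys at most one edge of an optimum matching, so $M_0$ is at most $\distPara$ short) is the right justification for the iteration count, which the paper leaves implicit.
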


From a more general point of view, \cref{thm:dist-to-cocomp-fptp} is a contribution to the ``FPT in P'' program \cite{GMN17}, heading for more efficient polynomial-time algorithms based on problem parameterizations (also cf.~\cite{FLPSW18,FKMNNT2017,CDP18,BFNN17}).

\bigskip

\noindent\textbf{Acknowledgement.} We are deeply grateful to two anonymous reviewers for their very careful reading and their constructive feedback that helped us significantly clarify and improve the presentation.

\bibliographystyle{abbrvnat}
\bibliography{bib}

\end{document}